\definecolor{codegreen}{rgb}{0,0.6,0}
\definecolor{codegray}{rgb}{0.5,0.5,0.5}
\definecolor{codepurple}{rgb}{0.58,0,0.82}
\definecolor{backcolour}{rgb}{0.95,0.95,0.92}
\lstdefinestyle{mystyle}{
    commentstyle=\color{codegreen},
    keywordstyle=\bf,
    stringstyle=\color{magenta},
    basicstyle=\ttfamily,
    breakatwhitespace=false,         
    breaklines=true,                 
    captionpos=b,                    
    keepspaces=true,                 
    numbers=none,                    
    numbersep=5pt,                  
    showspaces=false,                
    showstringspaces=false,
    showtabs=false,                  
    tabsize=2
}
\newcommand{\ob}[1]{{#1}}
\newtheorem{lemma}{Lemma}
\theoremstyle{definition}
\newtheorem{example}{Example}
\newtheorem{remark}{Remark}
\title{Estimating the Performance of Entity Resolution Algorithms: Lessons Learned Through PatentsView.org}
\author[1,2]{Olivier Binette}
\author[2]{Sokhna A York}
\author[2]{Emma Hickerson}
\author[1]{Youngsoo Baek}
\author[2]{Sarvo Madhavan}
\author[2]{Christina Jones}
\affil[1]{Duke University}
\affil[2]{American Institutes for Research}
\date{\today}
\begin{document}
\maketitle

\begin{abstract}
This paper introduces a novel evaluation methodology for entity resolution algorithms. It is motivated by PatentsView.org, a public-use patent data exploration platform that disambiguates patent inventors using an entity resolution algorithm. We provide a data collection methodology and tailored performance estimators that account for sampling biases. Our approach is simple, practical and principled -- key characteristics that allow us to paint the first representative picture of PatentsView's disambiguation performance. The results are used to inform PatentsView's users of the reliability of the data and to allow the comparison of competing disambiguation algorithms.
\end{abstract}

\section{Introduction}
Entity resolution (also called record linkage, deduplication, or disambiguation) is the task of identifying records in a database that refer to the same entity. An entity may be a person, a company, an object or an event. Records are assumed to contain partially identifying information about these entities. When there is no unique identifier (such as a social security number) available for all records, entity resolution becomes a complex problem which requires sophisticated algorithmic solutions \citep{Herzog2007, Christen2012, Dong2015, Ilyas2019, Christophides2019, Christen2019, Papadakis2021, Binette2022a}.

{Specifically, we are interested in the entity resolution system used by PatentsView.org, a public-use patent data exploration platform maintained by the American Institutes for Research (AIR), to disambiguate inventor mentions in patent data. The U.S. Patents and Trademarks Office (USPTO) makes available patent data dating back to 1790 (digitized full-text data is available from 1976). However, there is no standard for uniquely identifying inventors on patent applications. The result is a set of ambiguous mentions of inventors, where a single person's name may be spelled in different ways on two applications and where two different inventors with the same name may be difficult to distinguish. Inventors moving between locations and employers further complicates their identification. Following seminal works \citep{trajtenberg2008identification, ferreira2012brief, ventura2013methods, Li2014}, a disambiguation competition was held in 2015 leading to the disambiguation system currently used for PatentsView.org. Since then, disambiguated inventor data has been one of PatentsView's most popular data products, complementing its  data visualizations, search tools, and Application Programming Interface (API), and other data products that serve a wide variety of audience including students, educators, researchers, policymakers, small business owners, and the public \citep{toole2021patentsview}. Given challenges associated with the disambiguation process, the topic continues to be an active area of research \citep{Ventura2015, kim2016random, yang2017mixture, morrison2017disambiguation, muller2017semantic, traylor2017learning, balsmeier2018machine,tam2019optimal, monath2019scalable, doherr2021disambiguation}.}

{One key challenge in using, maintaining, and improving entity resolution systems is to evaluate their performance. In the case of PatentsView's disambiguation, no principled evaluation methodology is available to measure performance, to inform users of the reliability of the data, and to support methodological research to improve upon PatentsView's disambiguation algorithms. The state-of-the-art in entity resolution evaluation, namely computing performance evaluation metrics (precision, recall, etc) on benchmark datasets, leads to misleading and highly biased performance metrics as shown in section \ref{sec:challenges}. This is concerning given the many scientific uses of PatentsView's data: prior to June 2021, 179 research studies cited PatentsView as a data source, including around 25\% from the field of economics \citep{toole2021patentsview}. Furthermore, a common theme of research is the study of the relationship between public policy, inventor mobility, and inventor demographics, on innovation and patenting. This requires accurate inventor disambiguation to track inventors and entities through the breadth of patent data.}

{Our paper addresses this challenge, thus informing users of the reliability of disambiguated data and supporting methodological research to improve disambiguation algorithms. We propose novel evaluation methodology that is principled and cost-effective, and we demonstrate its effectiveness to evaluate PatentsView's disambiguation.}

{Before continuing with the rest of this introduction, we review terminology used throughout in section \ref{sec:terminology}. In section \ref{sec:challenges}, we then continue with challenges of evaluation, past work, and our contributions.}

\subsubsection{Terminology}\label{sec:terminology}

{
We consider a database of \textit{records}, where each record represents a mention to a given inventor (e.g., the first inventor of Patent number 12345). In this context, the records are also referred to as \textit{inventor mentions}. The goal of entity resolution is to \textit{cluster} inventor mentions according to the \textit{entity} (real-world inventor) to which they refer. Clusterings obtained from algorithms are referred to as \textit{predicted} clusters or \textit{predicted} disambiguations, whereas the (unknown) clustering corresponding to the true set of inventors is referred to as the \textit{ground truth}. Two inventor mentions are said to \textit{match} or to be a \textit{true match} if they refer to the same inventor. If two inventors are in the same predicted cluster, then they are a \textit{link}, or a \textit{predicted match}. The proportion of true matches among all predicted matches is called the \textit{pairwise precision}, while the proportion of predicted matches among all true matches is called the \textit{pairwise recall}.
}

\subsection{The Evaluation Problem}\label{sec:challenges}

The entity resolution evaluation problem is to extrapolate from observed performance in small samples to real performance in a database with millions of records. \cite{Wang2022} refer to this as bridging the reality-ideality gap in entity resolution, where high performance on benchmark datasets often does not translate into the real world. Here, performance may be defined as any combination of commonly used evaluation metrics for entity resolution, such as precision and recall, cluster homogeneity and completeness, rand index, or generalized merge distance \citep{Maidasani2012}. These metrics can be computed on benchmark datasets for which we have a ground truth disambiguation. However, the key evaluation problem is to obtain estimates that are representative of performance on the full data, for which no ground truth disambiguation is available. This is challenging for the following reasons.

First, entity resolution problems {do not scale linearly}. While it may be easy to disambiguate a small dataset, the opportunity for errors grows \textit{quadratically} in the number of records. As such, we may observe good performance of an algorithm on a small benchmark dataset, while the true performance on the entire dataset may be something else entirely. This particular effect of dataset size in entity resolution is explored in \cite{Draisbach2013} in the context of choosing similarity thresholds. This is a problem that PatentsView.org currently faces. Despite encouraging performance evaluation metrics on benchmark datasets, with nearly perfect precision and recall reported in the latest methodological report \citep{Monath2021}, the data science team at AIR observes lower real-world accuracy. {This phenomenon is illustrated in example \ref{first_example} below.}

A second problem is large class imbalance in entity resolution \citep{Marchant2017}. {Viewing entity resolution as a classification problem, the task is to classify record pairs as being a match or non-match. However,} among all pairs of records, only a small fraction (usually much less than a fraction of a percent) refer to the same entity. The vast majority of record pairs are not a match. This makes it difficult to evaluate performance through random sampling of record pairs.

A third problem is the multiplicity of sampling mechanisms used to obtain benchmark datasets. To construct hand-disambiguated datasets, blocks, entity clusters, or predicted clusters may be sampled with various probability weights. These sampling approaches must be accounted for in order to obtain representative performance estimates \citep{fuller2011sampling}.

Our approach, detailed in sections \ref{sec:approach} and \ref{sec:proposed_estimators}, addresses these challenges by putting forward novel cluster-based expressions for performance metrics that reflect various sampling schemes. Each of these representations immediately suggests simple estimators that properly account for the above issues.

\begin{example}[Bias of precision computed on benchmark datasets]\label{first_example}
    To exemplify the problem with the trivial use of performance evaluation metrics on benchmark datasets, we carried out a {toy} experiment that is described in detail in appendix \ref{appendix:example}. In short, we evaluated a disambiguation algorithm by sampling ground truth clusters and computing pairwise precision on this set of sampled clusters. This is analogous to the way that many real-world benchmark datasets are obtained and typically used. In this experiment, we know that the disambiguation algorithm has a precision of $52\%$ for the entire dataset.
    
    \begin{figure}[h]
        \centering
        \includegraphics[width=5.5in]{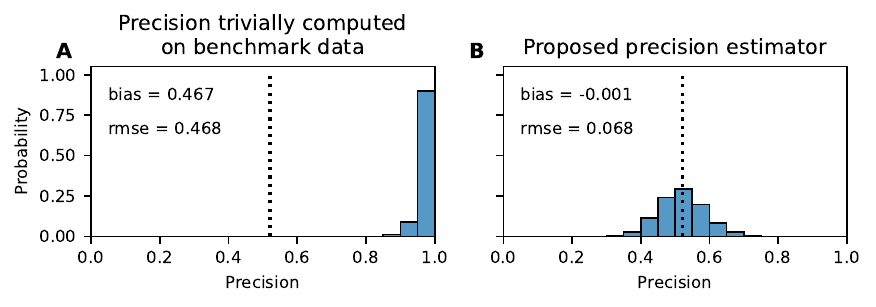}
        \caption{Distribution of precision estimates versus the true precision of $52\%$ (shown as a dotted vertical line). Panel \textbf{A} shows the trivial precision estimates computed for sampled records. Panel \textbf{B} shows our proposed precision estimates which accounts for the sampling mechanism. Sample bias and root mean squared error (rmse) are reported in each figure.}
        \label{fig:precision_problem}
    \end{figure}
    
    In panel \textbf{A} of figure \ref{fig:precision_problem}, we see the distribution of precision estimates versus the true precision of $52\%$ shown as a dotted vertical line. Precision estimates are usually very close to $100\%$ and always higher than $80\%$, despite the truth being a precision of only $52\%$. In contrast, panel \textbf{B} shows the distribution of our proposed precision estimator which is nearly unbiased. Both precision estimators rely on exactly the same data. They only differ in how they account for the underlying sampling process and the extrapolation from small benchmark datasets to the full data.

    {The same phenomenon can be observed in PatentsView's data, where naive precision is nearly 1 on all benchmark datasets. In our simulation studies (see Figure \ref{fig:sim_records_precision} for instance), naive precision estimates are always nearly 1, despite the true precision ranging from 60\% to 90\%.}

    {The reason why naive performance estimation performs disastrously is that it is much easier to disambiguate a small benchmark dataset than a large population with millions of records. Indeed, as a dataset grows, opportunity for erroneous links grows quadratically. False links between similarly named inventor, which are common in the full data, disappear when the benchmark dataset only contains a random sample of inventors. Our performance estimators extrapolate from performance observed on a small benchmark to true performance on the full data.}
    
\end{example}

\subsubsection{Why Bother With Evaluation?}

There are two main uses for accurate and statistically rigorous evaluation methodology.

The first is model selection and comparison. PatentsView.org continually works at improving disambiguation methodology. This requires choosing between alternative methods and evaluating the results of methodological experiments. Without sound evaluation methodology, decisions regarding the disambiguation algorithm may not align with real-world use and real-world performance. Notably, for a performance metric such as the pairwise f-score, one algorithm may perform better than another on a small benchmark dataset, while the opposite may hold true for performance on the entire data. This problem arises with typical benchmark datasets obtained from randomly sampling blocks or randomly sampling clusters (see section \ref{sec:proposed_estimators} for a definition of different sampling mechanisms).

The second is adequate use of disambiguated data. PatentsView.org's disambiguation results have been used in numerous scientific studies \citep{toole2021patentsview}.
{
For example, \cite{Choudhury2019} studied the effect of skilled worker immigration on patenting at U.S. companies and institutions, using PatentsView's inventor disambiguation to track individual immigrant inventors across time and location.
}
These studies make assumptions about the reliability of the data that need to be validated and upheld. In short, users of disambiguated data need to understand its reliability in order to make scientifically appropriate use of it. Evaluation aims to provide this rigorous reliability information.

\subsubsection{Past Work}

Much of the past literature has focused on defining and using relevant clustering evaluation metrics. The topic of estimating performance from samples has received much less attention, usually focusing on importance sampling estimators based on record pairs. We review the contributions to these two main topics below.

\paragraph{Metrics}

Pairwise precision and recall metrics were first reported in \cite{Newcombe1959}, with \cite{Bilenko2003} and \cite{Christen2007} emphasizing the importance of precision-recall curves for algorithm evaluation. However, there are issues with the use of pairwise precision and recall in entity resolution applications, such as the large relative importance of large clusters. As such, other clustering metrics have been proposed, including cluster precision and recall, cluster homogeneity and completeness, the B$^3$ metric \citep{bagga1998algorithms}, and generalized merge distances \citep{Michelson2009, Menestrina2010, Maidasani2012, Barnes2015}. Important practical issues regarding the use of aggregate metrics are discussed in \cite{Hand2018}.

{While our work focuses on estimating pairwise precision and recall, the general approach also applies to any cluster-based performance evaluation metrics, such as those described in \cite{Michelson2009}.}

\paragraph{Estimation}

Regarding the reliable estimation of performance metrics, \cite{Belin1995} first proposed a semi-supervised approach to calibrating error rates when using a Fellegi-Sunter model \citep{fellegi_theory_1969}. \cite{Marchant2017} proposed an adaptive importance sampling estimator to estimate precision and recall from sampled record pairs. Other approaches to the estimation of performance metrics are model based, where estimated precision and recall can be obtained from predicted match probabilities between record pairs \citep{enamorado2019using}. However, these model-based approaches cannot be used when working with black-box machine learning models or ad hoc clustering algorithms.

{In contrast, our approach to estimation is more practical than pairwise sampling and applies to any black-box disambiguation algorithms such as those used at PatentsView.}

\subsubsection{Our Approach}\label{sec:approach}

Our approach to estimating performance metrics is based on the use of benchmark datasets that already exist or that can be collected in a cost-effective way. These datasets contain entity clusters corresponding to either: (a) sampling records and recovering all associated instances, (b) directly sampling clusters, or (c) sampling blocks. For each of these sampling processes, we propose estimators that correct for the issues discussed in section \ref{sec:challenges}, are nearly unbiased, and are easy to use in practice.

Our approach has the following advantages:
\begin{enumerate}
    \item It can leverage existing benchmark datasets as well as new datasets collected specifically for performance evaluation.
    \item It can easily be generalized to estimate other clustering metrics, such as cluster precision, cluster recall, cluster homogeneity and completeness, and other generalized merge distances.
    \item For evaluation, the review of entity clusters is much more efficient than the review of record pairs. We can achieve high accuracy with small samples without relying on sophisticated sampling schemes.
\end{enumerate}

{Furthermore, our approach is novel. To our knowledge, we are the first to propose unbiased performance estimators based on cluster and block samples. Past work either ignored biases when computing precision and recall from benchmark datasets \citep{frisoli2018exploring, Monath2021, han2019disambiguating}, did not provide estimates for precision or recall \citep{mcveigh2019scaling}, or provided solutions tailored to very specific record linkage models \citep{Belin1995}. We provide the first general solution to entity resolution evaluation that does not rely on sampling record pairs and that applies to any disambiguation algorithm.}

In short, the proposed approach is simple, principled, and practical. It is simple to use, it is statistically principled in its account of sampling processes and uncertainty, and it is practical in the way that it can provide cost-effective estimates for any disambiguation algorithm.

\subsection{Structure of the Paper}

The rest of the paper is organized as follows. In section \ref{sec:data_methods}, {we describe benchmark datasets, our hand-disambiguation methodology for evaluation, the proposed estimators, and our simulation study that we use to validate the performance of our estimators.} Section \ref{sec:results} then presents our performance estimates and results from the simulation study. Section \ref{sec:discussion} summarizes the paper and explores future research directions.

\section{Data and Methodology}\label{sec:data_methods}

In this section, {we introduce the benchmark datasets used at PatentsView, our hand-disambiguation methodology, our proposed performance metric estimators, and the simulation framework that we use to compare estimators.} {Note that we focus on \textit{inventor} disambiguation throughout, rather than on the related problems of assignee and location disambiguation.}

\subsection{Benchmark Datasets for Inventor Disambiguation}

We consider the following benchmark datasets for inventor disambiguation.

\paragraph{Israeli Inventors Benchmark}
\cite{trajtenberg2008identification} disambiguated the U.S. patents of Israeli inventors that were granted between 1963 and 1999. A total of 6,023 Israeli inventors were identified for this time period with 15,310 associated patents.

\paragraph{\cite{Li2014}'s Inventors Benchmark}

Based on an original dataset from \cite{gu2008march}, \cite{Li2014} disambiguated the patent history (between 1975 and 2010) of 95 U.S. inventors.

\subsection{Hand-Disambiguation Methodology}\label{sec:hand-disambiguation}

In addition to considering the above benchmark datasets, {we have carried out hand-disambiguation of inventor mentions. This was motivated by the evaluation of the current PatentsView inventor disambiguation using the estimators proposed in section \ref{sec:proposed_estimators}.}

In total, 100 inventors were sampled with probability proportional to their number of granted patents. This was done by sampling inventor mentions uniformly at random and recovering all patents for a given inventor. These inventor mentions were from U.S. patents granted between 1976 and December 31, 2021.

Two AIR staff were tasked with recovering inventors' patents given sampled inventor mentions. First, given a sampled inventor, the associated predicted cluster was reviewed \ob{and any wrongly} assigned patents were removed. Next, PatentsView's search tools were used to find additional mentions of similarly named inventors. These inventor mentions were reviewed \ob{and added} to the predicted cluster, if appropriate. The two AIR staff had an initial training session, followed by a test run on 10 inventors, before carrying out the rest of the data collection. They worked independently, which resulted in two datasets being obtained for the same inventor mentions. In section \ref{sec:results}, these are referred to as the \textbf{Staff 1} and \textbf{Staff 2} datasets.

Note that our data collection methodology is biased toward PantentsView's current disambiguation. {Indeed, we did not expect the staff to have found all errors or all missing inventor mentions from the predicted clusters. The staff used their best judgment, supported by a thorough search, to resolve inventor mentions.} In cases where no errors were found, the current disambiguation was assumed to be correct. Performance estimates based on this data might therefore be slightly optimistic, which should be acknowledged when reporting performance estimates to PatentsView.org users. Otherwise, for the purpose of improving the current disambiguation \ob{algorithm, this data} is still appropriate to use. It represents the most visible errors in the current disambiguation rather than the totality of them.

\subsection{Proposed Performance Estimators}\label{sec:proposed_estimators}

Throughout the rest of the paper, we focus on pairwise precision and pairwise recall (defined below in \eqref{eq:def_P_R}) as our performance evaluation metrics.

\subsubsection{Representation Lemmas}

First, we define pairwise precision and recall in terms of the number of links between records. Let $\mathcal{D} = \{1, 2, 3, \dots, N\}$ index a set of records let $\mathcal{C}$ be the partition of $\mathcal{D}$ representing ground truth clustering, and let $\widehat{\mathcal{C}}$ be a set of predicted clusters. Now let $\mathcal{T}$ be the set of record pairs that appear in the same cluster in $\mathcal{C}$ (matching pairs), and let $\mathcal{P}$ be the set of record pairs that appear in the same predicted cluster in $\widehat{\mathcal{C}}$ (predicted links). Pairwise precision ($P$) and pairwise recall ($R$) are then defined as
\begin{equation}\label{eq:def_P_R}
    P = \frac{\lvert \mathcal{T} \cap \mathcal{P} \rvert}{\lvert \mathcal{P} \rvert}, \quad R = \frac{\lvert \mathcal{T} \cap \mathcal{P} \rvert}{\lvert \mathcal{T} \rvert}.
\end{equation}

Note that $P = R \, \lvert \mathcal{T} \rvert / \lvert \mathcal{P} \rvert$. As such, precision and recall are equal if and only if the right number of matching pairs is predicted under $\widehat{\mathcal{C}}$.

\begin{figure}[h]
    \centering
    \includegraphics{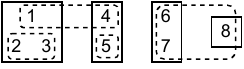}
    \caption{\ob{Example of ground truth clustering $\mathcal{C}$ (represented by boxes with a full border) and predicted clustering $\widehat{\mathcal{C}}$ (rounded boxes with a dotted border) of elements $\mathcal{D} = \{1,2,\dots, 8\}$. Here $\mathcal{T} = \{(1,2), (2,3), (1,3), (4,5), (6,7)\}$ and $\mathcal{P} = \{(1,4), (2,3), (6,7), (7,8), (6,8)\}$. As such, $P = R = 2/5$ in this example.}}
    \label{fig:my_label}
\end{figure}

We now provide three alternative representations of precision and recall that correspond to the processes of sampling records, sampling true clusters, and sampling blocks. These representations will be used to obtain precision and recall estimators under these sampling processes.

\paragraph{Record Sampling Representation}
For a given $i \in \mathcal{D}$, let $c(i) \in \mathcal{C}$ be the ground truth cluster associated with $i$ in $\mathcal{C}$. For a given $c \in \mathcal{C}$, we define
\begin{equation*}
    f(c,\, \widehat{\mathcal{C}}) = \sum_{\hat c \in \hat{\mathcal{C}}} { \lvert c \cap \hat c \rvert \choose 2 }, \quad g(c,\, \widehat{\mathcal{C}}) = \frac{f(c,\, \widehat{\mathcal{C}}) }{\sum_{\hat c \in \hat{\mathcal{C}}} { \lvert \hat c \rvert \choose 2 }}.
\end{equation*}
\begin{lemma}\label{lemma:record_sampling_representation}
If $i$ is distributed over $\mathcal{D}$ with probabilities $p_i > 0$, then
\begin{equation*}
    P = \mathbb{E} \left[ \frac{g(c(i),\, \widehat{\mathcal{C}})}{p_i \lvert c(i) \rvert} \right], \quad R = 2\frac{\mathbb{E}\left[ f(c(i),\, \widehat{\mathcal{C}})  \big/ (\lvert c(i) \rvert p_i) \right]}{\mathbb{E}\left[ (\lvert c(i) \rvert - 1) / p_i \right]}.
\end{equation*}
\end{lemma}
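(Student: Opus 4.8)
The plan is to turn both expectations into finite sums over records, cancel the sampling weights $p_i$ by regrouping the sums by ground-truth cluster, and then recognize the resulting cluster-indexed sums as the pair counts $|\mathcal{T}|$, $|\mathcal{P}|$ and $|\mathcal{T}\cap\mathcal{P}|$ that appear in the definition \eqref{eq:def_P_R}.

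First I would record three elementary counting identities. Since a pair $\{i,j\}$ is a matching pair precisely when $i$ and $j$ lie in a common true cluster, $|\mathcal{T}| = \sum_{c\in\mathcal{C}}\binom{|c|}{2}$; likewise $|\mathcal{P}| = \sum_{\hat c\in\widehat{\mathcal{C}}}\binom{|\hat c|}{2}$, which is exactly the denominator in the definition of $g$. For the intersection, a pair lies in $\mathcal{T}\cap\mathcal{P}$ precisely when both its records lie in $c\cap\hat c$ for some $c\in\mathcal{C}$ and $\hat c\in\widehat{\mathcal{C}}$, and since $\mathcal{C}$ and $\widehat{\mathcal{C}}$ are partitions the sets $c\cap\hat c$ are pairwise disjoint, so each such pair is counted exactly once: $|\mathcal{T}\cap\mathcal{P}| = \sum_{c\in\mathcal{C}}\sum_{\hat c\in\widehat{\mathcal{C}}}\binom{|c\cap\hat c|}{2} = \sum_{c\in\mathcal{C}} f(c,\widehat{\mathcal{C}})$. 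Dividing by $|\mathcal{P}|$ gives $\sum_{c\in\mathcal{C}} g(c,\widehat{\mathcal{C}}) = P$.

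Next, for the precision identity I would expand the expectation over the law of $i$: $\mathbb{E}\big[\,g(c(i),\widehat{\mathcal{C}})/(p_i|c(i)|)\,\big] = \sum_{i\in\mathcal{D}} p_i\cdot g(c(i),\widehat{\mathcal{C}})/(p_i|c(i)|)$. The factors $p_i$ cancel (this is where $p_i>0$, assumed in the statement, is used), and grouping the remaining sum by $c = c(i)$ replaces the sum over the $|c|$ records in $c$ by $|c|$ copies of $g(c,\widehat{\mathcal{C}})/|c|$, so the whole thing collapses to $\sum_{c\in\mathcal{C}} g(c,\widehat{\mathcal{C}}) = P$. The recall identity is handled identically: the numerator expands and regroups to $\sum_{c\in\mathcal{C}} f(c,\widehat{\mathcal{C}}) = |\mathcal{T}\cap\mathcal{P}|$, while the denominator expands to $\sum_{i\in\mathcal{D}}(|c(i)|-1) = \sum_{c\in\mathcal{C}} |c|(|c|-1) = 2\sum_{c\in\mathcal{C}}\binom{|c|}{2} = 2|\mathcal{T}|$; multiplying the ratio by $2$ yields $|\mathcal{T}\cap\mathcal{P}|/|\mathcal{T}| = R$.

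There is no genuine obstacle here: the argument is a cancellation plus a regrouping of finite sums. The one point that requires care is the double-counting bookkeeping in $|\mathcal{T}\cap\mathcal{P}| = \sum_{c} f(c,\widehat{\mathcal{C}})$ — one must check that a pair in $\mathcal{T}\cap\mathcal{P}$ contributes to exactly one term $\binom{|c\cap\hat c|}{2}$, which is where the partition structure of $\mathcal{C}$ and $\widehat{\mathcal{C}}$ is essential — and verifying that in the recall ratio the $\mathbb{E}[(|c(i)|-1)/p_i]$ in the denominator is nonzero, which holds as soon as some cluster has size at least $2$ (equivalently $\mathcal{T}\neq\emptyset$), the only case in which $R$ is defined.
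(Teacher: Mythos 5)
Your proof is correct and follows essentially the same route as the paper's: both rest on the cluster-form identities $\lvert\mathcal{T}\cap\mathcal{P}\rvert=\sum_{c\in\mathcal{C}}f(c,\widehat{\mathcal{C}})$, $\lvert\mathcal{P}\rvert=\sum_{\hat c\in\widehat{\mathcal{C}}}\binom{\lvert\hat c\rvert}{2}$, $\lvert\mathcal{T}\rvert=\sum_{c\in\mathcal{C}}\binom{\lvert c\rvert}{2}$, together with the conversion between record-indexed and cluster-indexed sums via the $1/\lvert c(i)\rvert$ weighting; you merely run the computation from the expectation back to $P$ and $R$ instead of the other way around, and you are somewhat more explicit than the paper about the disjointness bookkeeping and the nonvanishing denominator.
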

\begin{proof}
By breaking down $\mathcal{P}$ and $\mathcal{T}$ over predicted clusters, we find
\begin{equation*}
    P = \sum_{c\in\mathcal{C}} \sum_{\hat c \in \widehat{\mathcal{C}}} {\lvert c \cap \hat c \rvert \choose 2} \Big/ \sum_{\hat c \in \widehat{\mathcal{C}}} {\lvert \hat c \rvert \choose 2}.
\end{equation*}
Now writing $\sum_{c \in \mathcal{C}} \sum_{\hat c \in \widehat{\mathcal{C}}}{\lvert c \cap \hat c \rvert \choose 2} = \sum_{i=1}^{N} \frac{1}{\lvert c(i) \rvert} \sum_{\hat c \in \widehat{\mathcal{C}}} {\lvert c(i) \cap \hat c \rvert \choose 2}$ and substituting $g(c(i), \widehat{\mathcal{C}})$, we obtain
\begin{equation*}
    P = \sum_{i=1}^N p_i \frac{g(c(i)), \widehat{\mathcal{C}})}{p_i \lvert c(i) \rvert} = \mathbb{E}\left[ \frac{g(c(i), \widehat{\mathcal{C}})}{p_i \lvert c(i) \rvert} \right].
\end{equation*}
For recall, write 
\begin{equation}\label{eq:recall_cluster_form}
    R = \sum_{c\in\mathcal{C}} \sum_{\hat c \in \widehat{\mathcal{C}}} {\lvert c \cap \hat c \rvert \choose 2} \Big/ \sum_{ c \in {\mathcal{C}}} {\lvert c \rvert \choose 2}.
\end{equation}
Through a similar argument as above, we may express the numerator as 
\begin{equation}\label{eq:nominator}
    \sum_{c\in\mathcal{C}} \sum_{\hat c \in \widehat{\mathcal{C}}} {\lvert c \cap \hat c \rvert \choose 2} = \mathbb{E}\left[ f(c(i),\, \widehat{\mathcal{C}})  \big/ (\lvert c(i) \rvert p_i) \right].
\end{equation}
For the denominator, we have
\begin{equation}\label{eq:denominator}
    \sum_{c \in \mathcal{C}} {\lvert c \rvert \choose 2} = \sum_{i=1}^N \frac{1}{\lvert c(i) \rvert}  {\lvert c(i) \rvert \choose 2} = \mathbb{E}[ (\lvert c(i) \rvert - 1)/(2 p_i)].
\end{equation}
Combining \eqref{eq:nominator} and \eqref{eq:denominator} yields the result. 
\end{proof}

\paragraph{Cluster Sampling Representation}

In the cluster sampling case, sampling probabilities are typically known only up to a normalizing factor. This is because the total number of true clusters and other aspects of the ground truth cluster distribution are unknown in practice. As such, we provide expressions for precision and recall that only require knowing the sampling probabilities up to a normalizing factor. This allows the consideration of sampling uniformly at random and sampling clusters with probability proportional to their size. 

\begin{lemma}\label{lemma:cluster_sampling_representation}
If $c$ is distributed over $\mathcal{C}$ with probabilities proportional to $p_c > 0$, then
\begin{equation}\label{eq:lemma_1}
    P = \frac{N\, \mathbb{E} \left[ g(c,\, \widehat{\mathcal{C}})/p_c \right]}{\mathbb{E} \left[ \vert c \rvert /p_c \right]}, \quad R = \frac{\mathbb{E}\left[ f(c, \widehat{\mathcal{C}})/p_c \right]}{\mathbb{E}\left[ {\lvert c \rvert \choose 2}/p_c \right]}.
\end{equation}
\end{lemma}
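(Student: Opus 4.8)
The plan is to reduce both identities to a single elementary observation about the importance weights. Write $Z = \sum_{c' \in \mathcal{C}} p_{c'}$ for the (unknown) normalizing constant, so that $c$ takes the value $c' \in \mathcal{C}$ with probability $p_{c'}/Z$. Then for any function $h$ on $\mathcal{C}$ we have
\begin{equation}
    \mathbb{E}\!\left[ \frac{h(c)}{p_c} \right] = \sum_{c' \in \mathcal{C}} \frac{p_{c'}}{Z} \cdot \frac{h(c')}{p_{c'}} = \frac{1}{Z} \sum_{c' \in \mathcal{C}} h(c').
\end{equation}
The point of dividing by $p_c$ is precisely to cancel the sampling law and turn every expectation into an unweighted sum over clusters; the factor $Z$ then appears in each expectation and cancels in the ratios defining $P$ and $R$, which is why only $p_c$ up to proportionality matters.

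First I would apply this to the denominators. For precision, $\mathbb{E}[\lvert c \rvert / p_c] = Z^{-1}\sum_{c' \in \mathcal{C}} \lvert c' \rvert = N/Z$, since $\mathcal{C}$ partitions $\mathcal{D}$. For recall, $\mathbb{E}[\binom{\lvert c \rvert}{2}/p_c] = Z^{-1}\sum_{c' \in \mathcal{C}} \binom{\lvert c' \rvert}{2} = Z^{-1}\lvert \mathcal{T} \rvert$, because the pairs lying in a common ground-truth cluster form the disjoint union, over $c' \in \mathcal{C}$, of the $\binom{\lvert c' \rvert}{2}$ pairs inside $c'$.

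Next I would handle the numerators. The key step is the combinatorial identity
\begin{equation}
    \sum_{c' \in \mathcal{C}} f(c', \widehat{\mathcal{C}}) \;=\; \sum_{c' \in \mathcal{C}} \sum_{\hat c \in \widehat{\mathcal{C}}} \binom{\lvert c' \cap \hat c \rvert}{2} \;=\; \lvert \mathcal{T} \cap \mathcal{P} \rvert,
\end{equation}
which holds because an unordered pair $\{a,b\}$ lies in $\mathcal{T} \cap \mathcal{P}$ if and only if $a$ and $b$ share a (necessarily unique) true cluster $c'$ and a (necessarily unique) predicted cluster $\hat c$, i.e.\ $\{a,b\} \subseteq c' \cap \hat c$ for exactly one pair $(c', \hat c)$; so the double sum counts each such pair exactly once. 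Since the denominator in the definition of $g$ does not depend on $c'$ and equals $\sum_{\hat c \in \widehat{\mathcal{C}}} \binom{\lvert \hat c \rvert}{2} = \lvert \mathcal{P} \rvert$ (the same partition argument applied to $\widehat{\mathcal{C}}$), combining with the first observation gives $\mathbb{E}[f(c, \widehat{\mathcal{C}})/p_c] = Z^{-1}\lvert \mathcal{T} \cap \mathcal{P} \rvert$ and $\mathbb{E}[g(c, \widehat{\mathcal{C}})/p_c] = Z^{-1}\lvert \mathcal{T} \cap \mathcal{P} \rvert / \lvert \mathcal{P} \rvert$.

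Finally I would assemble the ratios. For precision, $N\,\mathbb{E}[g(c,\widehat{\mathcal{C}})/p_c] \big/ \mathbb{E}[\lvert c \rvert/p_c] = N \cdot \frac{\lvert \mathcal{T} \cap \mathcal{P} \rvert}{Z \lvert \mathcal{P} \rvert} \cdot \frac{Z}{N} = \lvert \mathcal{T} \cap \mathcal{P} \rvert / \lvert \mathcal{P} \rvert = P$; for recall, $\mathbb{E}[f(c,\widehat{\mathcal{C}})/p_c] \big/ \mathbb{E}[\binom{\lvert c \rvert}{2}/p_c] = \frac{\lvert \mathcal{T} \cap \mathcal{P} \rvert / Z}{\lvert \mathcal{T} \rvert / Z} = \lvert \mathcal{T} \cap \mathcal{P} \rvert / \lvert \mathcal{T} \rvert = R$, matching \eqref{eq:def_P_R}. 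I do not expect a genuine obstacle: the only substantive content is the double-counting identity for $\lvert \mathcal{T} \cap \mathcal{P} \rvert$, and the rest is bookkeeping — the main things to be careful about are tracking where $N$, $\lvert \mathcal{P} \rvert$ and $Z$ enter, and noting that all denominators are nonzero (which is implicit in $P$ and $R$ being well defined, i.e.\ $\mathcal{P} \neq \emptyset$ and $\mathcal{T} \neq \emptyset$).
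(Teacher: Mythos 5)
Your proposal is correct and follows essentially the same argument as the paper's proof: both rest on decomposing $\lvert \mathcal{T} \cap \mathcal{P} \rvert$, $\lvert \mathcal{P} \rvert$, and $\lvert \mathcal{T} \rvert$ as sums over ground-truth clusters and on the importance-weighting cancellation of the unknown normalizing constant ($Z$ in your notation, $\pi\lvert\mathcal{C}\rvert$ in the paper's) in the ratios. The only difference is cosmetic: you convert each expectation into an unweighted cluster sum and re-derive the identity $\sum_c f(c,\widehat{\mathcal{C}})=\lvert\mathcal{T}\cap\mathcal{P}\rvert$ from scratch, whereas the paper starts from $P=\sum_c g(c,\widehat{\mathcal{C}})$ (already established in the proof of Lemma \ref{lemma:record_sampling_representation}) and rewrites it as an expectation.
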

\begin{proof}
Let $\pi > 0$ be such that $\sum_{c \in \mathcal{C}} \pi p_c= 1$. Now write
\begin{equation*}
    P = \sum_{c \in \mathcal{C}} g(c, \widehat{\mathcal{C}}) = \lvert \mathcal{C} \rvert\, \sum _{c \in \mathcal{C}} \pi p_c g(c, \widehat{\mathcal{C}}) / (\pi p_c \lvert \mathcal{C} \rvert)  =  \lvert \mathcal{C}\rvert\, \mathbb{E}\left[ g(c, \widehat{\mathcal{C}})/(\pi p_c \lvert \mathcal{C} \rvert) \right]
\end{equation*}
and 
\begin{equation*}
    \lvert \mathcal{C} \rvert = \frac{N}{\frac{1}{\lvert \mathcal{C} \rvert} \sum_{c \in \mathcal{C}} \lvert c \rvert} = \frac{N}{\mathbb{E}[\lvert c \rvert / (\pi p_c \lvert \mathcal{C} \rvert)]}.
\end{equation*}
Simplifying $\pi \lvert \mathcal{C} \rvert$ from the numerator and denominator then yields the expression for precision.

The expression for recall follows in a straightforward way from \eqref{eq:recall_cluster_form}.
\end{proof}

Note that, with clusters sampled with probability proportional to their size, the expression for precision simplifies to
$
    P = N\, \mathbb{E} \left[ g(c,\, \widehat{\mathcal{C}})/\lvert c \rvert \right].
$

\begin{remark}
    Lemma \ref{lemma:record_sampling_representation} and lemma \ref{lemma:cluster_sampling_representation} can be generalized to apply to any performance metric that can be expressed as a sum $\sum_{c \in \mathcal{C}} h(c, \widehat{\mathcal{C}})$, for some function $h$, or as a function of such sums. For instance, the use of the so-called \textit{cluster} precision and \textit{cluster} recall \citep{Barnes2015}, or of cluster homogeneity and completeness \citep{Barnes2015}, can be more appropriate in the presence of large clusters. We leave these generalizations as extensions of our work.
\end{remark}

\paragraph{Disjoint Block Sampling Representation}

Let $\mathcal{B}$ be a partition of $\mathcal{D}$ such that for every $ c \in \mathcal{C}$, there exists $b \in \mathcal{B}$ with $c \subset b$. For a given $b \in \mathcal{B}$, let $\mathcal{T}_b$ be the set of ground truth links contained within $b$, let $\mathcal{P}_b$ be the set of predicted links contained within $b$, and let $\mathcal{P}_b^{-}$ be the set of predicted links with a single record in $b$ (i.e., $\mathcal{P}_b^{-}$ is the set of outgoing links from $b$).

\begin{lemma}\label{lemma:block_sampling_representation}
If $b$ is distributed over $\mathcal{B}$ with probabilities proportional to $p_b > 0$, then
\begin{equation}\label{eq:lemma_3}
    P = \frac{\mathbb{E}\left[ \lvert \mathcal{T}_b \cap \mathcal{P}_b \rvert / p_b \right]}{\mathbb{E}\left[ (\lvert \mathcal{P}_b \rvert + \tfrac{1}{2} \lvert \mathcal{P}_b^{-} \rvert)/p_b \right]}, \quad R = \frac{\mathbb{E}\left[ \lvert \mathcal{T}_b \cap \mathcal{P}_b \rvert / p_b \right]}{\mathbb{E}\left[ \lvert \mathcal{T}_b \rvert / p_b \right]}.
\end{equation}
\end{lemma}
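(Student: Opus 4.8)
The plan is to reduce both equalities in \eqref{eq:lemma_3} to elementary counting identities over the blocks and then use the normalization of the sampling distribution. First I would note that if $b$ is drawn from $\mathcal{B}$ with $\mathbb{P}(b) = p_b/Z$ for $Z = \sum_{b'\in\mathcal{B}} p_{b'}$, then $\mathbb{E}[\phi(b)/p_b] = Z^{-1}\sum_{b\in\mathcal{B}} \phi(b)$ for any function $\phi$ on $\mathcal{B}$. Consequently the (unknown) factor $Z$ cancels in each of the two ratios in \eqref{eq:lemma_3}, so it suffices to establish that $\sum_{b} \lvert\mathcal{T}_b\cap\mathcal{P}_b\rvert = \lvert\mathcal{T}\cap\mathcal{P}\rvert$, that $\sum_{b} \lvert\mathcal{T}_b\rvert = \lvert\mathcal{T}\rvert$, and that $\sum_{b}\bigl(\lvert\mathcal{P}_b\rvert + \tfrac12\lvert\mathcal{P}_b^{-}\rvert\bigr) = \lvert\mathcal{P}\rvert$; the lemma then follows by substituting these into the definitions \eqref{eq:def_P_R} of $P$ and $R$.

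For the first two identities, the key input is the defining property of $\mathcal{B}$: every ground truth cluster $c\in\mathcal{C}$ lies inside a single block. Thus any matching pair has both its records in a common block, and since the blocks are disjoint this block is unique; so the sets $\mathcal{T}_b$, $b\in\mathcal{B}$, are pairwise disjoint with union $\mathcal{T}$, giving $\sum_b \lvert\mathcal{T}_b\rvert = \lvert\mathcal{T}\rvert$. The same argument applied to a pair in $\mathcal{T}\cap\mathcal{P}$ — which lies in $\mathcal{T}$ and hence within a unique block $b$, and is then necessarily in $\mathcal{T}_b\cap\mathcal{P}_b$ — shows that the sets $\mathcal{T}_b\cap\mathcal{P}_b$ are pairwise disjoint with union $\mathcal{T}\cap\mathcal{P}$, which is the recall numerator identity.

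The third identity is where the bookkeeping must be done carefully, and I expect it to be the only delicate point. Each predicted link $(i,j)\in\mathcal{P}$ is of exactly one of two types: an \emph{internal} link, with both records in one block $b$, in which case it is counted exactly once, by $\lvert\mathcal{P}_b\rvert$; or a \emph{crossing} link, with $i\in b$ and $j\in b'$ for distinct blocks $b\ne b'$, in which case it contributes to $\mathcal{P}_b^{-}$ and to $\mathcal{P}_{b'}^{-}$ and hence is counted exactly twice by $\sum_b \lvert\mathcal{P}_b^{-}\rvert$. Therefore $\sum_b \lvert\mathcal{P}_b^{-}\rvert$ equals twice the number of crossing links, so $\lvert\mathcal{P}\rvert = \sum_b \lvert\mathcal{P}_b\rvert + \tfrac12\sum_b\lvert\mathcal{P}_b^{-}\rvert$, which explains the $\tfrac12$ factor in the precision denominator. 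Combining the three identities with the cancellation of $Z$ yields \eqref{eq:lemma_3} exactly.
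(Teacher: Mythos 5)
Your proposal is correct and follows essentially the same route as the paper's proof: decompose $\mathcal{T}$ and $\mathcal{T}\cap\mathcal{P}$ as disjoint unions over blocks (using that every true cluster lies inside one block), and account for $\lvert\mathcal{P}\rvert$ as within-block links plus half of the crossing links, which are counted twice across blocks. Your added step making the cancellation of the normalizing constant explicit is a useful detail the paper leaves implicit, but it does not change the argument.
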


\begin{proof}
Since the blocking procedure is assumed to have no error (for every $ c \in \mathcal{C}$, there exists $b \in \mathcal{B}$ with $c \subset b$), we can break down $\mathcal{T}$ as the disjoint union of the $\mathcal{T}_b$'s over $b \in \mathcal{B}$. It follows that $\lvert T \rvert = \sum_{b \in \mathcal{B}} \lvert \mathcal{T}_b \rvert$ and  $\lvert \mathcal{T} \cap \mathcal{P} \rvert = \sum_{b \in \mathcal{B}} \lvert \mathcal{T}_b \cap \mathcal{P}_b \rvert$. The expression for recall in \eqref{eq:lemma_3} follows directly. For precision, we can express $\lvert \mathcal{P} \rvert$ as the number of links within blocks plus the number of links across blocks. Since the number of links across blocks is counted twice when each block is considered, we obtain $\lvert \mathcal{P} \rvert = \sum_{b \in \mathcal{B}} \left( \lvert \mathcal{P}_b \rvert + \tfrac{1}{2} \lvert \mathcal{P}_b^{-} \rvert \right)$. 
\end{proof}

\begin{remark}
    Lemmas \ref{lemma:record_sampling_representation} -- \ref{lemma:block_sampling_representation} are formulated in terms of sampled ground truth clusters and sampled blocks that do not contain errors. However, given the duality between precision and recall (interchanging the roles between $\mathcal{C}$ and $\widehat{\mathcal{C}}$  interchanges precision and recall), the results also apply to sampling \textit{predicted} clusters.
\end{remark}

\subsubsection{Proposed Estimators}

All of the expressions for precision and recall in lemmas \ref{lemma:record_sampling_representation} -- \ref{lemma:block_sampling_representation} are either population means or ratios of population means. As such, they can be estimated using sample means and ratios of sample means. For readability, we present here a generic formula for an approximately unbiased estimator of the ratio of means and then specify the needed quantities for each representation below. The estimator applies a first order bias correction to the ratio of sample means, based on a Taylor approximation. Approximate confidence intervals can be computed based on the variance estimator of the ratio of sample means by Taylor approximation, assuming the corrected estimator has a small bias \citep{sarndal2003model, fuller2011sampling}. The generic formula for estimating the ratio of $T$-sized ``population'' (of records/clusters/blocks) means of the form
\begin{align}
    E=\frac{1/T\sum_{i=1}^{T} B_i}{1/T\sum_{i=1}^T A_i},
\end{align}
assuming we have sampled $n$ elements (records/clusters/blocks), is
\begin{align}\label{eq:bias_adjustment}
    \widehat{E} = \frac{\bar{B}_n}{\bar{A}_n} \left\{ 1 + \frac{\theta_{n,T}}{n(n-1)} \sum_{s=1}^n \frac{A_s}{\bar{A}_n} \left(\frac{B_s}{\bar{B}_n} - \frac{A_s}{\bar{A}_n} \right) \right\}, \; \bar{A}_n = \frac{1}{n} \sum_{s=1}^{n} A_s, \; \bar{B}_n = \frac{1}{n} \sum_{s=1}^{n} B_s
\end{align}
We note that an additional symbol $\theta_{n,T}$ is introduced for a possible finite population correction when relatively large number of elements are sampled without replacement (see below). Classical adjustment is set to $\theta_{n,T} = (1 - \frac{n-1}{T-1})$ \citep{Cochran1977}. For practical purposes when $T$ is large, $\theta_{n,T}=1$ will suffice. In fact, knowledge of $T$ is not needed at all as long as it is large enough relative to $n$, which is useful because the total number of true clusters/blocks is not known in advance. Confidence intervals can be computed based on the variance estimate of the above, which is
\begin{align} \label{eq:estimated_variance}
    \widehat{V}(\widehat{E}) = \left( \frac{\bar{B}_n}{\bar{A}_n} \right)^2 \frac{\theta_{n,T}}{n(n-1)} \sum_{s=1}^n \left( \frac{A_s}{\bar{A}_n} - \frac{B_s}{\bar{B}_n}\right)^2.
\end{align}
The specific values for $A_s$, $B_s$, and $T$ in each of our representation are described in appendix \ref{sec:appendix_estimators}.

%\begin{remark}
%    There are many ways to define ratio estimators. For small samples and given prior knowledge, Bayesian approaches would be the most appropriate. In this paper, we have opted for the simple estimators since they are sufficient for our application and demonstrate the strength of our methodology.
%\end{remark}

\begin{remark}
    In entity resolution applications, we can typically assume that elements have been sampled with replacement (or closely so). Indeed, with a small proportion of sampled elements, nonreplacement samples are approximately equivalent to samples with replacement. However, if dealing with relatively large nonreplacement samples, then the sampling probabilities used in the definition of the estimators should be adjusted to reflect the size-dependent effect of nonreplacement \citep{Horvitz1952}.
\end{remark}

\subsection{Simulation Study}\label{sec:simulation_study}

In order to assess the performance of the proposed estimators, we carried out a simulation study based on PatentsView's inventor disambiguation. Specifically, in the context of the simulation, we considered PatentsView current inventor disambiguation as the ground truth clustering. A simulated set of predicted clusters was obtained by introducing errors (misattribution of inventor mentions) into the current disambiguation. We then estimated the precision and recall of this predicted clustering using our estimators based on random cluster samples. The process of sampling clusters and estimating precision/recall was repeated $100$ times in order to provide the distribution of the estimators and metrics such as bias and root mean squared error (rmse). 

To introduce errors, we picked records at random and changed their cluster assignment to that of other records picked at random. This is a simple process that ensures that larger clusters are more likely to contain errors. In our simulation, we considered rates of $5\%$, $10\%$, and $25\%$ for the proportion of records that are sampled for cluster misassignment. Although the larger error rates are more realistic, the $5\%$ misattribution rate helps showcase the properties of our estimators when only a small proportion of the sampled clusters is associated with errors.

For the sampling process, we considered sampling records uniformly at random and recovering their associated clusters. This is the same as sampling clusters with probability proportional to their size. In the record/cluster sampling cases, we looked at the effect of sampling $100$, $200$ and $400$ records/clusters.

Finally, we compared the following three precision and recall estimators:
\begin{description}
    \item[\texttt{P\_naive}, \texttt{R\_naive}] This is the ``naive" precision (respectively recall) estimator obtained by computing precision (respectively recall) when only looking at records that appear in the sampled clusters.
    \item[\texttt{P\_record}, \texttt{R\_record}] These are the precision and recall estimators corresponding to uniformly sampling records in lemma \ref{lemma:record_sampling_representation} ($p_i \propto 1$) with the bias adjustment given in \eqref{eq:bias_adjustment}. Note that these are the same as the estimators obtained from lemma \ref{lemma:cluster_sampling_representation} when sampling clusters with probability proportional to their size ($p_c \propto \lvert c \rvert$).
    \item[\texttt{P\_cluster\_block}] This is the precision estimator obtained by considering each sampled cluster as its own block in lemma \ref{lemma:block_sampling_representation}, where clusters have been sampled with probability proportional to their size ($p_b \propto \lvert b \rvert$) and with the bias adjustment given in \eqref{eq:bias_adjustment}. Note that in the case of recall with cluster blocks, the estimator corresponding to lemma \ref{lemma:block_sampling_representation} is the same as the one corresponding to lemma \ref{lemma:cluster_sampling_representation}.
\end{description}

\section{Results}\label{sec:results}

\subsection{Results From the Simulation Study}

Figure \ref{fig:sim_records_precision} shows the distribution of the three precision estimators used in the simulation study (see section \ref{sec:simulation_study}) compared to ground truth precision. The block sampling estimator \texttt{P\_cluster\_block} is highly accurate, while \texttt{P\_record} is more variable and \texttt{P\_naive} is entirely uninformative. Note that \texttt{P\_record} can take values greater than $1$ (not shown in this figure), so that truncating it to be less than $1$ introduces a bias in some cases.

\texttt{P\_cluster\_block} performs better than \texttt{P\_record} because \texttt{P\_record} has been derived in a generic way that applies to any performance metric that can be expressed in cluster form similar to \eqref{eq:lemma_1}. On the other hand, \texttt{P\_cluster\_block} relies on specific properties of precision. Among other things, this ensures that \texttt{P\_cluster\_block} is constrained to be between $0$ and $1$. In practice, \texttt{P\_cluster\_block} should be preferred as a pairwise precision estimator. The bias and rmse of the precision estimators are reported in table \ref{tab:precision_results}.

\begin{figure}
    \centering
        \caption{Distribution of precision estimates for various sample sizes and misattribution rates, as described in section \ref{sec:simulation_study}. Ground truth precision is marked by a dotted vertical line. The \texttt{rate} variable represents the percentage misattribution rate. The estimator \texttt{P\_cluster\_block} is highly accurate, while \texttt{P\_naive} is almost always close to $1.0$, having little to do with the true precision.}
    \includegraphics[width=\linewidth]{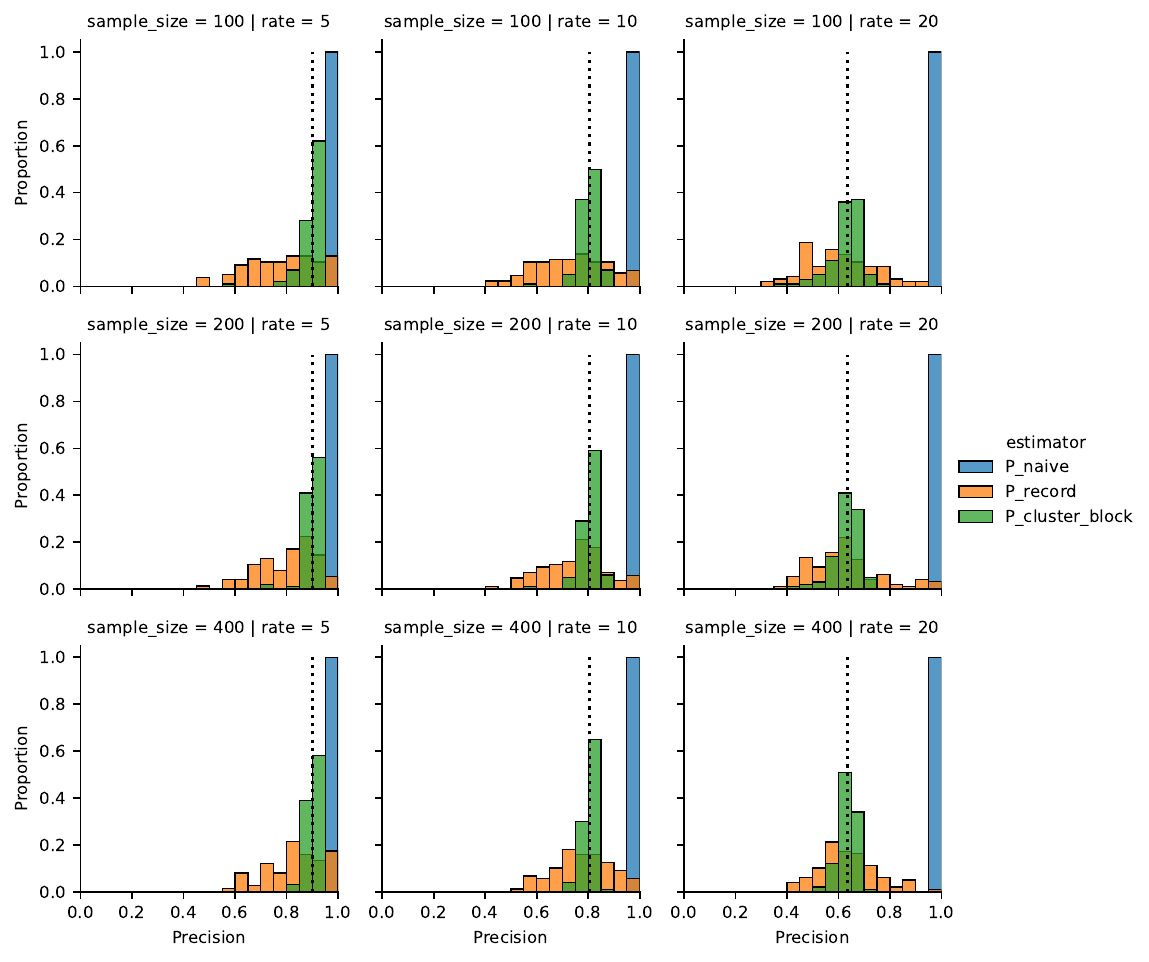}
    \label{fig:sim_records_precision}
\end{figure}

\begin{table}
    \centering
    \caption{\label{tab:precision_results}\centering Bias and root mean squared error (rmse) of precision estimators for the simulation study described in section \ref{sec:simulation_study}. The \texttt{rate} variable represents the percentage misattribution rate.}
\begin{tabular}{llccccccccc}
\toprule
{} & \hfill{\textbf{rate}} & \multicolumn{3}{c}{5} & \multicolumn{3}{c}{10} & \multicolumn{3}{c}{20} \\
\cmidrule(lr){3-5}\cmidrule(lr){6-8}\cmidrule(lr){9-11}
{} & \hfill{\textbf{sample size}} & {100} & {200} & {400} & {100} & {200} & {400} & {100} & {200} & {400} \\
\midrule
\multirow[c]{4}{*}{\textbf{bias}} & P\_cluster\_block & -0.004 & \textbf{-0.002} & \textbf{-0.001} & -0.002 & \textbf{-0.001} & \textbf{0.001} & -0.004 & \textbf{-0.001} & \textbf{-0.000} \\
 & P\_naive & 0.099 & 0.099 & 0.099 & 0.195 & 0.195 & 0.195 & 0.366 & 0.366 & 0.365 \\
 & P\_record & \textbf{-0.002} & 0.013 & 0.009 & \textbf{-0.001} & 0.011 & 0.008 & \textbf{-0.001} & 0.009 & 0.006 \\
\midrule
\multirow[c]{4}{*}{\textbf{rmse}} & P\_cluster\_block & \textbf{0.045} & \textbf{0.033} & \textbf{0.021} & \textbf{0.041} & \textbf{0.038} & \textbf{0.026} & \textbf{0.063} & \textbf{0.052} & \textbf{0.034} \\
 & P\_naive & 0.099 & 0.099 & 0.099 & 0.195 & 0.195 & 0.195 & 0.366 & 0.366 & 0.365 \\
 & P\_record & 0.294 & 0.232 & 0.169 & 0.260 & 0.205 & 0.150 & 0.207 & 0.162 & 0.117 \\
\bottomrule
\end{tabular}
\end{table}

Regarding recall, figure \ref{fig:recall_results} shows the distribution of the two estimators used in the simulation study. The naive recall estimator performs well in this case. However, the recall estimator accounting for the sampling mechanism is more accurate (\texttt{R\_record}, which is equal to \texttt{R\_cluster\_block}). The bias and rmse of the recall estimators are reported in table \ref{tab:recall_results}, {where the overall improved performance of \texttt{R\_record} can be observed.}

\begin{figure}
    \centering
        \caption{Distribution of recall estimates for various sample sizes and misattribution rates, as described in section \ref{sec:simulation_study}. Ground truth recall is marked by a dotted vertical line. The \texttt{rate} variable represents the percentage misattribution rate.}
    \includegraphics[width=\linewidth]{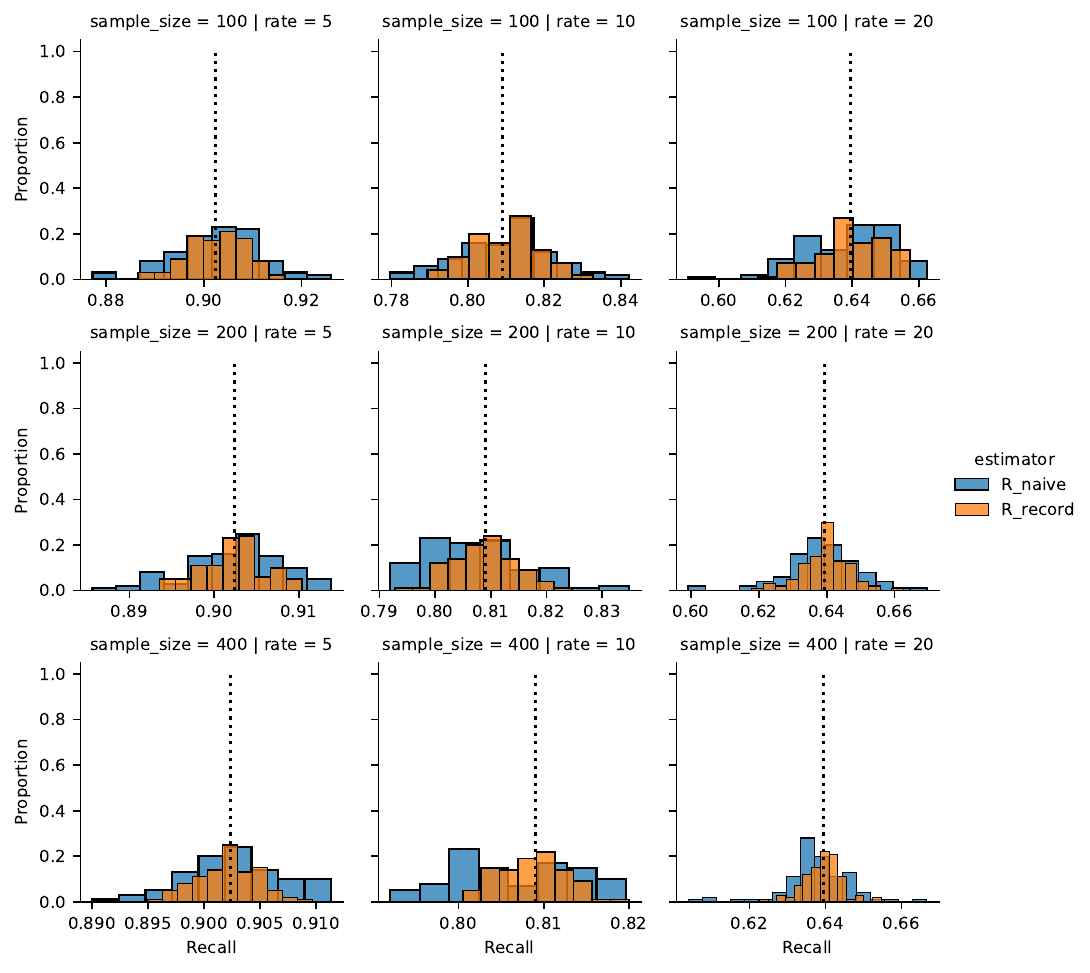}
    \label{fig:recall_results}
\end{figure}

\begin{table}
    \centering
    \caption{\label{tab:recall_results}\centering Bias and root mean squared error of recall estimators for the simulation study described in section \ref{sec:simulation_study}. The \texttt{rate} variable represents the percentage misattribution rate.}
\begin{tabular}{llccccccccc}
\toprule
{} & {\textbf{rate}} & \multicolumn{3}{c}{{5}} & \multicolumn{3}{c}{10} & \multicolumn{3}{c}{20} \\
\cmidrule(lr){3-5}\cmidrule(lr){6-8}\cmidrule(lr){9-11}
{} & {\textbf{sample size}} & {100} & {200} & {400} & {100} & {200} & {400} & {100} & {200} & {400} \\
\midrule
\multirow[c]{2}{*}{\textbf{bias}} & R\_record & \textbf{0.0002} & -0.0001 & \textbf{-0.0002} & 0.0011 & \textbf{-0.0002} & \textbf{-0.0004} & \textbf{0.0005} & \textbf{0.0001} & \textbf{-0.0002} \\
 & R\_naive & 0.0007 & 0.0001 & 0.0004 & \textbf{0.0009} & -0.0015 & -0.0024 & -0.0008 & -0.0010 & -0.0017 \\
 \midrule
 \multirow[c]{2}{*}{\textbf{rmse}} & R\_record & \textbf{0.0060} & \textbf{0.0037} & \textbf{0.0028} & \textbf{0.0089} & \textbf{0.0056} & \textbf{0.0039} & \textbf{0.0099} & \textbf{0.0071} & \textbf{0.0050} \\
 & R\_naive & 0.0089 & 0.0056 & 0.0043 & 0.0123 & 0.0089 & 0.0075 & 0.0132 & 0.0109 & 0.0091 \\

\bottomrule
\end{tabular}

\end{table}

\subsection{Evaluation of PatentsView's Disambiguation}

Table \ref{tab:results} shows estimated pairwise precision and recall from benchmark datasets and from our two hand-curated datasets. Note that each estimate is associated with a given population of inventor mentions which is a subset of granted U.S. patents since 1976. We focused on U.S. patents granted since 1976 as this is the main data product of PatentsView.org.

The choice of estimators for the results presented in table \ref{tab:results} is as follows.
Since our two hand-curated datasets (see section \ref{sec:hand-disambiguation}) were obtained by sampling inventor clusters with probabilities proportional to cluster sizes, we used the \texttt{P\_cluster\_block} and \texttt{R\_record} estimators with corresponding probability weights. 
For the Israeli benchmark dataset, we assumed that a single block of inventor clusters was sampled and used the corresponding estimators defined through \eqref{eq:lemma_3} with a single block sample. Note that no variance estimates can be given for single samples.
For \cite{Li2014}'s inventors benchmark, given that the inventor clusters were originally obtained from a set of inventor curriculum vitae, we assumed that the clusters were sampled uniformly at random. As such, we used cluster block estimators with constant probability weights.

\begin{table}[!h]
    \centering
    \caption{\centering Estimated pairwise precision and recall (with estimated standard deviation) from our four benchmark datasets.}
    \label{tab:results}
    \begin{tabular}{lccc}
    \toprule
    \textbf{dataset} & \textbf{est. precision ($\hat \sigma$)} & \textbf{est. recall ($\hat \sigma$)} & \textbf{scope} \\
    \midrule
    Staff 1 & 88\% (3.4\%) & 95\% (1.1\%)  & 1976 -- Dec. 31, 2022 (U.S. granted)\\
    Staff 2 & 87\% (3.6\%) & 96\% (1.0\%) & 1976 -- Dec. 31, 2022 (U.S. granted)\\
    Israeli Benchmark & 79\% (NA) & 94\% (NA) & 1976 -- 1999 (U.S. granted)\\
    \cite{Li2014}'s Benchmark & 91\% (2.7\%) & 91\% (5.0\%) & 1976 -- 2010 (U.S. granted)\\
    \bottomrule
    \end{tabular}
\end{table}

Overall, our performance estimates paint the first realistic picture of PatentsView's disambiguation accuracy in practice. Precision is not nearly 100\%, as would be assumed from naively computing precision on benchmark datasets. Rather, there is significant room for improvement. Our hand-curated datasets and data collection methodology provide the necessary basis to investigate errors and plan for improvements to the disambiguation algorithm.

\section{Discussion}\label{sec:discussion}

Motivated by PatentsView's disambiguation, this paper introduced a novel evaluation methodology for entity resolution algorithms. The methodology relies on benchmark datasets containing ground truth clusters and estimators that account for biases inherent to these datasets. For PatentsView, this provided the first representative estimates of its disambiguation performance. Furthermore, all data and code used in this paper, as well as other tools developed to facilitate evaluation at PatentsView, are freely available at \hyperref[https://github.com/PatentsView/PatentsView-Evaluation/releases/tag/1.0.1]{https://github.com/PatentsView/PatentsView-Evaluation/releases/tag/1.0.1}. An updated software package is described in \cite{Binette2023}.

There are two main products resulting from this work. The first is the appropriate understanding of the quality of the data provided to PatentsView's users. Our performance estimates indicate that, despite an overall accurate disambiguation, there is significant room for \ob{improvement. Notably, the current disambiguation over-estimates the number of matching inventor mention pairs.} The second product is the set of tools needed for methodological research and model comparison. Given our evaluation methodology, we can now reliably compare algorithms and decide with confidence on changes that will affect users.

{One important topic for future work is the quantification of uncertainty associated with errors in the hand-disambiguation process. This is challenging problem given the lack of validation information available. Surveying inventors to validate the hand-disambiguation process would be one way to explore this issue. Sensitivity analyses involving potential errors in the hand-disambiguation process could also be informative. We refer the reader to \cite{bailey2017well} for a state-of-the-art hand-labeling study that evaluated human review accuracy.}

{Another important topic for future work is the development of estimators for additional performance metrics. As we have seen in the simulation study, the generically derived estimators (e.g., from lemma \ref{lemma:cluster_sampling_representation}) do not perform as well as estimators derived using specific properties of pairwise precision and recall (lemma \ref{lemma:block_sampling_representation}). As such, care should be taken to obtain efficient estimators for every metric of interest.}

\ob{Finally, we note that the performance of estimators can degrade when dealing with heavy-tailed cluster size distributions. The bias of ratio estimators can be high in this case, especially when using small samples and when sampling clusters uniformly at random rather than with probability proportional to size. Model-based estimators that exploit known properties of the cluster size distribution could be developed to improve estimation accuracy in such cases.}

\section*{Data and Code}

All data and code used for this paper are available as part of the PatentsView-Evaluation Python package (version 1.0.1) at \hyperref[https://github.com/PatentsView/PatentsView-Evaluation/releases/tag/1.0.1]{https://github.com/PatentsView/PatentsView-Evaluation/releases/tag/1.0.1}.

%\section*{Acknowledgements}

\section*{Author Contributions}

Olivier Binette led the evaluation project and wrote the majority of the manuscript. Sokhna A York and Emma Hickerson carried out the data collection by manually reviewing inventor clusters. Youngsoo Baek provided bias adjustment and uncertainty quantification for ratio estimators. Sarvo Madhavan was a technical advisor and contributed to code. Christina Jones was an advisor and project manager. All authors provided input on the manuscript.

\section*{Disclosure Statement}

The authors report there are no competing interests to declare.

\nocite{Subramanian2021}
\bibliographystyle{chicago}
\bibliography{biblio}

\appendix

\section{Appendix}\label{sec:appendix}

\subsection{Bias of Precision Computed on Benchmark Datasets}\label{appendix:example}

This section provides more information on example \ref{first_example}.

For this example, we considered the RLdata10000 dataset from \cite{RecordLinkage}. This is a synthetic dataset containing 10,000 records with first name, last name, and date of birth attributes. There is noise in these attributes and a $10\%$ duplication rate. Ground truth identity is known for all records.

The disambiguation algorithm we consider matches records if any of the following conditions are met:
\begin{itemize}
    \item records agree on first name, last name, and birth year,
    \item records agree on first name, birth day, and birth year, or
    \item records agree on last name, birth day, and birth year.
\end{itemize}
Note that this is not at all a good disambiguation algorithm. It has $52\%$ precision and $83\%$ recall. However, it allows us to to showcase the issue with nonadjusted precision computed on cluster samples.

In our experiment, we have repeated 5,000 times the following three-steps process 5,000 times:
\begin{enumerate}
    \item First, 200 records were sampled and the ground truth clusters associated with them were recovered. This step provided a ``benchmark" dataset that was used for evaluation.
    \item Second, a trivial precision estimate was obtained by computing precision over the benchmark dataset. That is, predicted cluster assignments were restricted to records that appear in the benchmark data and precision was compared for these records. More often than not, the result was an observation of $100\%$ precision.
    \item Third, we computed our proposed precision estimator which corresponds to lemma \ref{lemma:record_sampling_representation} and the estimator $\widehat P_{\text{block}}$ defined in \eqref{eq:P_block}, with blocks corresponding to clusters and with sampling probabilities $p_b \propto \lvert b \rvert$.
\end{enumerate}
The distributions of the two precision estimates over the 5,000 repetitions are shown in figure \ref{fig:precision_problem}. Our proposed estimator is accurate and nearly unbiased, whereas the trivial precision estimates have almost nothing to do with actual algorithmic performance.

\subsection{Precision and Recall Estimator Formulas} \label{sec:appendix_estimators}

This section describes specific values for $A_s$, $B_s$, and $T$ in \eqref{eq:bias_adjustment} in order to obtained nearly unbiased precision and recall estimators based on each of the representation in section \ref{sec:proposed_estimators}. We use the symbols $\widehat P$ and $\widehat R$, indexed by either ``$\text{rec}$", ``$\text{clust}$", or ``$\text{block}$", in order to refer to precision and recall estimators corresponding to record, cluster, and block sampling representations, respectively.

\paragraph{Record Sampling Estimators}

\begin{description}
    \item[$\widehat{P}_{\text{rec}}$]{Estimating $P$ is a special case that does not require ratio-of-means estimation. We propose to use a simple unbiased estimator:}
    \begin{align}\label{eq:P_rec}
        \widehat{P}_{\text{rec}} =
        \frac{1}{n}\sum_{s=1}^{n}\frac{g(c(i_s),\widehat{\mathcal{C}})}{|c(i_s)|p_{i_s}}.
    \end{align}
    The unbiased estimator of the variance of $\widehat{P}_{\text{rec}}$ is also available:
    \begin{align}
        \widehat{V}(\widehat{P}_{\text{rec}}) = 
        \frac{\theta_{n,N}}{n(n-1)}\sum_{s=1}^{n}\left(
        \frac{g(c(i_s),\widehat{\mathcal{C}})^2}{|c(i_s)|^2p_{i_s}^2} - 
        \widehat{P}_{\text{rec}}^2.
        \right)
    \end{align}
    \item[$\widehat{R}_{\text{rec}}$]{Going forward, we refer to formulae \eqref{eq:bias_adjustment} and \eqref{eq:estimated_variance}. $\widehat{R}_{\text{rec}}$ and its variance estimate $\widehat{V}(\widehat{R}_{\text{rec}})$ are given by \eqref{eq:bias_adjustment} and \eqref{eq:estimated_variance}, where we substitute in}
    \begin{align}
    T = N,\; A_s = \frac{(\lvert c(i_s) \rvert - 1)}{p_{i_s}},\; 
    B_s = 2\frac{f(c(i_s),\widehat{\mathcal{C}})}{|c(i_s)|p_{i_s}}.
    \end{align}
\end{description}

\paragraph{Cluster Sampling Estimators}

\begin{description}
    \item[$\widehat{P}_{\text{clust}}$]{The estimator and its variance estimate $\widehat{V}(\widehat{P}_{\text{clust}})$ are given by \eqref{eq:bias_adjustment} and \eqref{eq:estimated_variance}, where we substitute in}
    \begin{align}
        T = |\mathcal{C}|,\; A_s = \frac{|c_s|}{p_{c_s}},\; B_s = N\frac{g(c_s,\widehat{\mathcal{C}})}{p_{c_s}}.
    \end{align}
    \item[$\widehat{R}_{\text{clust}}$]{The estimator and its variance estimate $\widehat{V}(\widehat{R}_{\text{clust}})$ are given by \eqref{eq:bias_adjustment} and \eqref{eq:estimated_variance}, where we substitute in}
    \begin{align}
    T = |\mathcal{C}|,\; A_s = \frac{{\lvert c_s \rvert \choose 2}}{p_{c_s}},\; B_s = \frac{f(c_s,\widehat{\mathcal{C}})}{p_{c_s}}. 
    \end{align}
\end{description}

\paragraph{Disjoint Block Sampling Estimators}

\begin{description}
    \item[$\widehat{P}_{\text{block}}$]{The estimator and its variance estimate $\widehat{V}(\widehat{P}_{\text{block}})$ are given by \eqref{eq:bias_adjustment} and \eqref{eq:estimated_variance}, where we substitute in}
    \begin{align}\label{eq:P_block}
        T = |\mathcal{B}|,\; A_s = \frac{\lvert \mathcal{P}_{b_s} \rvert + \frac{1}{2} \lvert \mathcal{P}_{b_s}^{-} \rvert}{p_{b_s}},\;
        B_s = \frac{\lvert \mathcal{T}_{b_s} \cap \mathcal{P}_{b_s} \rvert}{p_{b_s}}.
    \end{align}
    \item[$\widehat{R}_{\text{block}}$]{The estimator and its variance estimate $\widehat{V}(\widehat{R}_{\text{block}})$ are given by \eqref{eq:bias_adjustment} and \eqref{eq:estimated_variance}, where we substitute in}
    \begin{align}
        T = |\mathcal{B}|,\; A_s = \frac{\lvert \mathcal{T}_{b_s} \rvert}{p_{b_s}},\; B_s = 
        \frac{\lvert \mathcal{T}_{b_s} \cap \mathcal{P}_{b_s} \rvert}{p_{b_s}}.
    \end{align}
\end{description}

\end{document}